\newcommand{\ket}[1]{\left|#1\right\rangle}
\theoremstyle{remark}
\theoremstyle{plain}
\newtheorem{theorem}[subsection]{Theorem}
\theoremstyle{definition}
\begin{document}
\title{Simultaneous Measurement and Entanglement}
\author{Andrey Boris Khesin}
\author{Peter Shor}
\affiliation{Department of Mathematics$,$ Massachusetts Institute of Technology$,$ Cambridge$,$ Massachusetts 02139$,$ USA}
\date{\today}
\begin{abstract}
\fontsize{9pt}{9pt}\selectfont
We study scenarios which arise when two spatially-separated observers, Alice and Bob, are try to identify a quantum state sampled from several possibilities.
In particular, we examine their strategies for maximizing both the probability of guessing their state correctly as well as their information gain about it.
It is known that there are scenarios where allowing Alice and Bob to use LOCC offers an improvement over the case where they must make their measurements simultaneously.
Similarly, Alice and Bob can sometimes improve their outcomes if they have access to a Bell pair.
We show how LOCC allows Alice and Bob to distinguish between two product states optimally and find that a LOCC is almost always more helpful than a Bell pair for distinguishing product states.
\end{abstract}
\fontsize{12pt}{12pt}\selectfont

\maketitle

\pagestyle{myheadings}
\markboth{Khesin and Shor}{Simultaneous Measurement and Entanglement}
\thispagestyle{empty}
\section{Introduction}
In this paper, we consider situations where two parties, whom we will call Alice and Bob, hold correlated, but unentangled, quantum states. Their goal is to perform measurements to learn about the states. The first paper to consider such a situation was the seminal paper of Peres and Wootters \cite{pw}. They considered two experimenters in separate labs who are each given a qubit; the two qubits are in the same state, which is chosen from a set of three pure states. Their work began with the question of whether separate measurements could extract as much information about the state as a joint measurement could. Although they gave strong evidence that this was true, they did not provide a proof. Proofs were eventually provided by Bennet et al.\ \cite{bdfetal} (for a different set of correlated states) and by Chitamber and Hsieh \cite{ch} (for the Peres-Wootters states). 

Peres and Wootters further showed that if separate measurements are required, the optimal measurement needs to be an LOCC (local operations and classical communication) measurement. That is, making simultaneous measurements in both labs is not adequate; a measurement is made in one lab, the outcome is transmitted to the other lab, and the second lab's measurement depends
on the outcome of the first one. They further showed the remarkable fact that the best way of extracting information about their state requires not just two LOCC measurements, but a long sequence of LOCC weak measurements, interspersed by classical communication between the experimenters.

We will be looking at another variation on this theme. In our paper, we will require that the two parties make the measurements simultaneously. That is, the two experiments must make independent measurements in each lab, without communicating with each other during the measurement process. After the measurements, the two experimenters share their information to learn as much as they can about the original states. We will compare the case where the experiments are or are not allowed to share an entangled state before making their measurements. We further compare these scenarios with how much they can learn using LOCC measurements. 

We consider both the case where Alice and Bob are tasked with maximizing the probability of correctly guessing the state that they were given as well as the case where their task is to maximize the information they gain about the identity of their state.
We examine how much better they do at these tasks under various conditions.

Note that allowing Alice and Bob to use both a Bell pair and LOCC would allow Alice to use quantum teleportation to send her qubit to Bob, who would then be able to make joint measurements on the two qubits, removing the challenge imposed by the labs being separated.
Furthermore, we note that if we allowed the possibilities from which we choose the state to be entangled, then we could simply choose a state from the Bell basis.
If Alice and Bob have access to a Bell pair, they can determine the state with certainty by making Bell basis measurements on their half of the state and of the Bell pair.
They would not be able to achieve more than a 0.5 chance of guessing correctly with only LOCC.
For this reason, we limit our investigation to product states.

In this paper, we make a number of claims regarding optimal strategies for Alice and Bob under various conditions for optimizing certain metrics.
These strategies were obtained with the help of an algebra engine in several ways.
The first was an optimization over all the parameters of the measurement with a number of different guesses at good initial conditions.
The second strategy was to fix the initial conditions (such as measuring in the computational basis) and alternating between optimizing Alice's measurement while keeping Bob's fixed, and optimizing Bob's while fixing Alice's.
The fact that these resulted in agreement gives us a lot of confidence about the following results.

\section{Simultaneous Measurement}

Suppose that Alice and Bob are in different laboratories and can only communicate classically.
They are each handed a single qubit and told that their joint state was randomly chosen from the following four states.
\begin{align*}
&\ket{0}_A\ket{0}_B\\
&\ket{0}_A\ket{1}_B\\
&\ket{1}_A\ket{+}_B\\
&\ket{1}_A\ket{-}_B
\end{align*}

Alice and Bob can guarantee that they can identify which of the four states they are holding using only LOCC.
Namely, Alice measures her qubit in the $Z$ basis, identifying it precisely, and then tells Bob the result.
If Alice was holding $\ket0$, Bob measures his qubit in the $Z$ basis and if Alice had $\ket1$, Bob uses the $X$ basis.
This allows Bob to identify his qubit as well.

However, if we now force Alice and Bob to make their measurements simultaneously, we see that although Alice can always determine her state, Bob has to try to guess his state and can succeed with probability $\frac{2+\sqrt2}{4}\approx0.853$ by measuring in the $\frac{X+Z}{\sqrt2}$ basis. This results in an information gain of $\frac{2-\sqrt2\log(3-2\sqrt2)}4\approx1.399$ bits.
This shows that there are situations where LOCC outperforms simultaneous measurement.

Meanwhile if Alice and Bob try to maximize their information gain, their best strategy is for both to measure in the $Z$ basis. Alice will learn her qubit and Bob will learn his if Alice measures $\ket0$ and will learn nothing if Alice measures $\ket1$. This results in an information gain of $\frac32$ bits and a $\frac34$ probability of them guessing their state.

\section{Measurement with Entanglement}

We now turn our attention to the same setup where Alice and Bob make their measurements simultaneously, but this time equip them with a Bell pair.
In this case, Alice and Bob can also determine the state they hold with certainty.
Alice starts by measuring her qubit of the joint state in the $Z$ basis.
Depending on whether the result is $\ket0$ or $\ket1$, she measures her half of the Bell pair in the $Z$ or $X$ basis, respectively.
Meanwhile, Bob measures the two qubits he is holding in the Bell basis.

To examine this procedure in more detail, suppose Alice is handed the state $\ket0$, which she determines and then measures her entangled qubit in the $Z$ basis.
By measuring in the Bell basis, Bob learns whether his two qubits are the same or opposite.
After communicating with Alice and learning what result Alice got when she measured the entangled qubit, the two can use their results to determine the original value of his joint state qubit.
Similarly, if Alice had obtained the result $\ket1$, the same procedure works since Alice's half of the Bell pair is still measured in the same basis as Bob's qubit.

Thus the above protocol allows Alice and Bob to guarantee that they learn their state, assuming that they shared a Bell pair.
Having established some separation between both LOCC and simultaneous entangled measurements compared with simultaneous unentangled measurements, we wish to compare the two and see if one can outperform the other.

\section{LOCC}

Consider the scenario where the state is selected uniformly at random from the following six options.
Here, $\ket{+i}$ and $\ket{-i}$ represent the $+1$ and $-1$ eigenstates of the Pauli $Y$ operator.

\begin{align*}
&\ket{0}_A\ket{1}_B\\
&\ket{1}_A\ket{0}_B\\
&\ket{+}_A\ket{-}_B\\
&\ket{-}_A\ket{+}_B\\
&\ket{+i}_A\ket{-i}_B\\
&\ket{-i}_A\ket{+i}_B
\end{align*}

Note that these states can be naturally grouped into three pairs corresponding to the $X$, $Y$, and $Z$ Pauli matrices.
Now, Alice and Bob can attempt to optimize several different metrics including the probability of guessing their state, the probability of guessing the basis of their state, or the information that they can obtain about their state.
We will start by examining the case where Alice makes a single projective measurement and sends the result to Bob, who also makes a single projective measurement.
In the strategies we examine, we let Alice make a measurement and communicate classically to Bob.
This falls short of the full power of LOCC, but allows us to consider what can be done with one-way LOCC.

The best Alice and Bob can do to improve their odds of guessing their state and of guessing their basis is to have Alice measure in the $\frac{X+Z}{\sqrt2}$ basis and Bob measure in the $\frac{X-Z}{\sqrt2}$ basis.
With this protocol, Alice and Bob have a $\frac{3+2\sqrt2}{12}\approx0.486$ chance of guessing their state, a $\frac12$ chance of guessing their basis, and an information gain of \begin{align*}&\frac{(3+2\sqrt2)\log(3+2\sqrt2)}{12}\\
+&\frac{(3-2\sqrt2)\log(3-2\sqrt2)}{12}-\frac23\approx0.532.
\end{align*}

If Alice and Bob want to maximize their information gain instead, what they should do is have Alice measure in the $Z$ basis and Bob measure in the $X$ basis.
This leaves Alice and Bob with a $\frac13$ chance of guessing their state, a $\frac13$ chance of guessing their basis, and an information gain of $\frac23$ bits.

The fascinating thing is that neither of these protocols requires use of LOCC.
Alice and Bob can measure simultaneously and still achieve this result!
Naturally, since this is their optimal strategy with one-way LOCC, it must also be their optimal strategy for simultaneous unentangled measurements.

\section{Entanglement}

When Alice and Bob are allowed to share a Bell pair, we find no strategy allowing them to improve their odds of guessing their state beyond the $\approx0.486$ given by the above protocol.
However, when Alice and Bob both measure the qubit they were handed as well as their half of the Bell pair in the Bell basis, they improve both of their other metrics.
Namely, they now have a $\frac13$ chance of guessing their state, a $\frac23$ chance of guessing their basis, and an information gain of $\frac{\log(3)}2\approx0.792$ bits.

\section{Product States}

With the example in the previous two sections, it would seem that access to a Bell pair is preferable to LOCC.
However, we can construct a case where the reverse is true.
We simply suppose that Alice and Bob are either handed the state $\ket{\psi_1}_A\ket{\psi_2}_B$ or the state $\ket{\phi_1}_A\ket{\phi_2}_B$.
Notably, these are product states, which means that there is no joint measurement that cannot be expressed as a product of individual measurements on the two qubits.
Indeed, we find that using LOCC, Alice and Bob have a strategy of distinguishing between the states with the same probability as the optimal joint measurement. This fact was originally discovered in \cite{vspm}.

\begin{theorem}\rm {(Viermani et al.)} \it
LOCC can distinguish between two single-qubit product states with optimal probability.
\end{theorem}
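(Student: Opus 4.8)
The plan is to reduce the problem to a canonical symmetric form, read off the global optimum from the Helstrom bound, and then exhibit an explicit one-way LOCC protocol that meets it. First I would observe that local unitaries change neither LOCC-distinguishability nor the global optimum, so I may rotate each qubit separately to put the states into symmetric position. Writing $a=|\braket{\psi_1}{\phi_1}|$, $b=|\braket{\psi_2}{\phi_2}|$ and choosing $\alpha,\beta\in[0,\pi/2]$ with $a=\cos\alpha$, $b=\cos\beta$, I may assume
\[ \ket{\psi_1}=\cos\tfrac\alpha2\ket0+\sin\tfrac\alpha2\ket1,\quad \ket{\phi_1}=\cos\tfrac\alpha2\ket0-\sin\tfrac\alpha2\ket1, \]
and analogously for $\ket{\psi_2},\ket{\phi_2}$ with $\beta$. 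Since the joint overlap factorizes, $\braket{\psi_1\psi_2}{\phi_1\phi_2}=\braket{\psi_1}{\phi_1}\braket{\psi_2}{\phi_2}$, the Helstrom bound for the optimal global (entangled) measurement on these two equiprobable states is
\[ P_{\text{opt}}=\tfrac12\left(1+\sqrt{1-a^2b^2}\right). \]
Because any LOCC strategy is in particular a global measurement, $P_{\text{LOCC}}\le P_{\text{opt}}$ holds automatically, so the entire content of the theorem is to produce a single LOCC protocol achieving equality.

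The protocol I would use is one-way. Alice measures her qubit in the basis that is optimal for discriminating \emph{her own} pair $\ket{\psi_1},\ket{\phi_1}$, which in the canonical frame is just the $\{\ket+,\ket-\}$ basis, and she transmits her binary outcome to Bob. The key feature is that this outcome is not decisive but instead reweights the two hypotheses: conditioned on Alice reporting $\ket+$, Bob holds $\ket{\psi_2}$ or $\ket{\phi_2}$ with unequal posterior weights, and symmetrically for $\ket-$. Bob then performs, for each of Alice's two possible messages, the optimal unequal-prior two-state Helstrom measurement on his own qubit.

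The main computation, and the step I expect to be the crux, is to add the two conditional Helstrom successes and verify that the trigonometry collapses to the global bound. Parametrizing Alice's basis by an angle and writing the single-qubit Helstrom value for posterior weights $w_1,w_2$ and overlap $c=\cos\beta$ as $\tfrac12(w_1+w_2)+\tfrac12\sqrt{(w_1-w_2)^2+4w_1w_2\sin^2\beta}$, each of Alice's two outcomes contributes one such term. Since the weights for the two outcomes sum to $\tfrac12$ on each hypothesis, the flat pieces combine to $\tfrac12$ and one is left with $P_{\text{LOCC}}=\tfrac12+\tfrac12(\text{sum of two square roots})$; at the balanced $\{\ket+,\ket-\}$ choice the two roots coincide and reduce to $\sqrt{\sin^2\alpha+\cos^2\alpha\sin^2\beta}$. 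The last step is the simplification
\[ \sin^2\alpha+\cos^2\alpha\sin^2\beta=1-\cos^2\alpha\cos^2\beta=1-a^2b^2, \]
which yields $P_{\text{LOCC}}=P_{\text{opt}}$. The delicate point to get right is the bookkeeping of the unequal posterior priors Bob inherits from Alice's report; once those are tracked correctly the identity is forced, and the free upper bound from the first paragraph makes the protocol optimal without any further optimization over Alice's measurement.
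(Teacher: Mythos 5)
Your proposal is correct and follows essentially the same route as the paper: reduce by local unitaries to a canonical frame, note that the factorized overlap gives the global Helstrom bound $\tfrac12\bigl(1+\sqrt{1-a^2b^2}\bigr)$, and then have Alice perform her locally optimal measurement followed by Bob performing the posterior-adapted optimal measurement. The only difference is presentational: the paper states Bob's conditional measurement as an explicit rotation angle $\eta$ and asserts the resulting success probability, whereas you derive it from the unequal-prior Helstrom formula and actually carry out the trigonometric verification that the two branches sum to the global bound, which makes your write-up the more self-contained of the two.
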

\begin{proof}
Let us suppose that Alice and Bob are either handed the state $\ket{\psi_1}_A\ket{\psi_2}_B$ or $\ket{\phi_1}_A\ket{\phi_2}_B$ with equal probability.
Alice and Bob both start by applying a unitary transformation to their qubit that sends the respective $\ket{\psi_k}$ to $\ket0$ and $\ket{\phi_k}$ to the state $\cos(\theta_k)\ket0+\sin(\theta_k)\ket1$ where $\theta_k$ is the angle between $\ket{\psi_k}$ and $\ket{\phi_k}$.

An optimal joint measurement to distinguish between the two states $\ket\alpha$ and $\ket\beta$ separated by an angle $\theta$ is given by a projective measurement rotated from the $Z$ basis about the $Y$ axis by $\theta-\frac\pi2$.
Note that is quantity is negative.
This gives a success probability of $$\cos\left(\frac{\sin^{\text-1}(\cos(\theta))}2\right)^2=\frac{1+\sqrt{1-\cos(\theta)^2}}2.$$

In the case of Alice and Bob, we replace $\cos(\theta)$ with the inner product of the two product states, yielding a final expression of $\frac{1+\sqrt{1-\cos(\theta_1)^2\cos(\theta_2)^2}}2$.
To achieve this probability of success using LOCC, Alice starts by making an optimal measurement on her own qubit by rotating the $Z$ basis about the $Y$ axis by $\theta-\frac\pi2$.
After hearing Alice's result, Bob measures in a basis rotated by $\eta=\tan^{\text-1}\left(\frac{\sin(\theta_1)-1}{\sin(\theta_1)\cot(\theta_2)+\tan(\theta_2)}\right)$ if Alice gets the result $+1$ and $2\theta_2-\eta$ if she gets $-1$.
These angles also result in a probability of success of $\frac{1+\sqrt{1-\cos(\theta_1)^2\cos(\theta_2)^2}}2$.
\end{proof}

The idea behind the above measurement is that Alice makes a guess as to which of the states she has and then Bob either confirms or rejects this guess with his measurements.
This shows that for all values of $\theta_{1,2}$ other than when one is a multiple of $\frac\pi2$, LOCC is more powerful than a Bell pair.

\section{The Peres-Wootters State}

In a seminal paper on optimal strategies for spatially-separated observers, Peres and Wootters consider an interesting set of states \cite{pw}.
Alice and Bob receive a state chosen uniformly at random from the following list.

\begin{align*}
\ket{0}_A&\ket{0}_B\\
{\textstyle \left(\frac12\ket0+\frac{\sqrt3}2\ket{1}\right)_A}&{\textstyle\left(\frac12\ket0+\frac{\sqrt3}2\ket{1}\right)_B}\\
{\textstyle \left(\frac12\ket0-\frac{\sqrt3}2\ket{1}\right)_A}&{\textstyle\left(\frac12\ket0-\frac{\sqrt3}2\ket{1}\right)_B}
\end{align*}

In other words, Alice and Bob receive identical unentangled qubits which have been rotated from the state $\ket0$ in the $XZ$ plane by either 0, $\frac{2\pi}3$, or $\frac{4\pi}3$.
If we require Alice and Bob to each make a single measurement (disallowing the complicated algorithm of \cite{pw}), the optimal one-way LOCC strategy to maximize both their probability of guessing the state as well as their information gain is as follows.

Alice starts by making a POVM with three components each projecting along a vector directly opposite to one of the possibilities of her qubit.
For any measurement result she gets, she can guarantee that her qubit was not in the state opposite the measurement result, as the two vectors have inner product 0.
This means that Bob is left with two equally likely possibilities and they are $\frac{2\pi}3$ apart, so the optimal strategy for Bob is to use a projective measurement to distinguish the vectors.
Bob succeeds with probability $\cos(\frac\pi{12})^2=\frac{2+\sqrt3}4\approx0.933$, yielding an information gain of
$$\log(3) + \frac{\sqrt3\log\left(2 + \sqrt3\right)}2 - 2\approx1.23.$$

When Alice and Bob are allowed to share a Bell pair, their best strategy is similar. It is worth noting that although we describe these events as though they are happening in sequence to simplify understanding, Alice and Bob are making measurements simultaneously.
Alice starts by making the same POVM as in the above protocol.
Next, Bob measures his qubit and his half of the Bell pair in the Bell basis.
This has effectively teleported Bob's state to Alice half of the Bell pair but with a random Pauli transformation applied to it.
Then, Alice measures her half of the Bell pair.
If Alice determined that the state is not $\ket{00}$, she makes a measurement in the $X$ basis, otherwise she measures in the $Z$ basis.
The $X$ basis measurement is optimal for distinguishing between the two remaining states.
However, the $Z$ basis measurement is only optimal because Alice does not know which Pauli transformation was applied to Bob's state during the quantum teleportation.
The result of this protocol is that Alice and Bob have a $\frac{9+\sqrt3}{12}\approx0.894$ chance of guessing their state and get an information gain of \fontsize{11.5pt}{11.5}\selectfont
$$\log(3)+\frac{2\sqrt3\log(2+\sqrt3)-5\log(5)}{12}\approx1.17.$$

\fontsize{12pt}{12pt}\selectfont
Additionally, to see how much of an improvement this is, we can compare these values to the best strategy for Alice and Bob when they are forced to make their measurements simultaneously but without the use of entanglement.
This does not allow them to benefit from neither LOCC nor a Bell pair.
We find that their optimal strategy is to make a three component POVM, again with each component offset by $\frac{\pi}3$, but this time the vectors along which the POVM elements are projecting have been rotated from our original vectors by $\frac12\tan^{\text-1}\left(\frac{\sqrt[3]4-1}{\sqrt3}\right)\approx0.163$.
This gives Alice and Bob a $\frac1{6-3\sqrt[3]4}\approx0.808$ chance of guessing their state and an information gain of approximately $0.867$ bits.

To maximize their information gain without LOCC or a Bell pair, Alice and Bob should both measure with three POVM components each pointing opposite one of the three possible vectors that they are holding.
This will result in them getting different results and identifying their state or the same result and having to guess between two possibilities, leading to an information gain of $\log(3)-\frac12\approx1.08$ bits and a probability of guessing their state of $\frac34$.

\section{Conclusion}

In this paper we have examined various strategies for two observers to try to either determine an unknown state or maximize their learned information about it.
We compared how much these metrics improved when they were allowed to use one-way LOCC compared with when they were allowed to share a Bell pair.

We find that in every situation we examined, Alice and Bob have a better chance of guessing their state when they are allowed to use LOCC compared with the Bell pair. It is an interesting question whether this holds for all sets of product states, 

In particular, when distinguishing only two states, LOCC allowed Alice and Bob to identify their state with optimal probability. 
However, in one scenario with six possible choices for the state, the Bell pair was more helpful than LOCC when it came to identifying the basis of the state as well as maximizing the information gain.
This might allow Alice and Bob to amplify this advantage by sending a long sequence of such states with a Bell pair for each one.
However, this opens the door for Alice and Bob to make joint measurements on all of their qubits, which may show that LOCC dominates once again for maximizing the probability of guessing their state correctly.
We believe this to be an interesting direction for further research, so we ask: does LOCC always allow for a higher probability of guessing the state than simultaneous measurement with entanglement, despite the fact that the latter sometimes has higher information gain?

\section{Acknowledgements}

The authors are grateful to the NSF for supporting this research.
The authors are also grateful to William Wootters for helpful discussions.

Andrey Boris Khesin was supported in part by NSF grant CCF-1452616.

Peter Shor was supported in part by the National Science Foundation under Grant No. CCF-1729369 and through the NSF Science and Technology Center for Science of Information under Grant No. CCF-0939370.

\end{document}